\newtheorem{claim}{Claim}
\newtheorem{definition}{Definition}
\newtheorem{proposition}{Proposition}
\theoremstyle{plain} 
\newtheorem{example}{Example}
\newcommand{\indep}{\perp \!\!\! \perp}
\begin{document}

\title{
Algorithmic Fairness and Social Welfare}
\shortTitle{Short title for running head}
\author{Annie Liang and Jay Lu\thanks{Liang: Northwestern, 2211 Campus Drive, Evanston IL 60208, annie.liang@northwestern.edu. 
Lu: UCLA, 315 Portola Plaza
Los Angeles, CA 90095, jay@econ.ucla.edu. We thank Andrei Iakovlev for helpful comments.}}\pubMonth{Month}
\date{\today}
\pubYear{Year}
\pubVolume{Vol}
\pubIssue{Issue}
\maketitle

Algorithms are increasingly used to guide high-stakes decisions about individuals (for example, which patients to treat, or what kind of loan to offer a prospective borrower). Consequently, substantial interest has developed around defining and measuring the ``fairness'' of these algorithms \citep{EthicalAlg,barocas-hardt-narayanan}.

These definitions of fair algorithms share two features: First, they prioritize the role of a pre-defined  group identity (e.g., race or gender) by focusing on how the algorithm's impact differs systematically across groups.\footnote{Exceptions include, among others, \citet{DworkHardtPitassiReingoldZemel}'s comparison of outcomes across similar individuals.} Second, they are statistical in nature; for example, comparing false positive rates, or assessing whether group identity is independent of the decision (both viewed as random variables).

These notions are facially distinct from a social welfare approach to fairness, in particular one based on
 ``veil of ignorance'' thought experiments in which individuals choose how to structure society prior to the  realization of their social identity. 
In this paper, we seek to understand and organize the relationship between these different approaches to fairness. Can the optimization criteria proposed in the algorithmic fairness literature also be motivated as the choices of someone from behind the veil of ignorance? If not, what properties distinguish either approach to fairness?

We review the algorithmic fairness approach in Section \ref{sec:CS}, and the social welfare approach in Section \ref{sec:Veil}. We show in Section \ref{sec:Compare} that these approaches are fundamentally different. We conclude by proposing a framework that nests both approaches  in Section \ref{sec:Generalization}.

\section{Setup}
There is a population of subjects each described by three variables: a covariate vector $X \in \mathcal{X}$, a type $Y \in \mathcal{Y}$ (i.e., what the algorithm would like to predict), and a group identity $G \in \mathcal{G}$.  For simplicity of exposition, we assume the sets $\mathcal{X}$, $\mathcal{Y}$, and $\mathcal{G}$ are finite. The population distribution relating covariate vectors, types, and group identities is denoted by  $\mu \in \Delta(\mathcal{X} \times \mathcal{Y} \times \mathcal{G})$.

Each subject receives a decision from the finite set $\mathcal{D}$ as the (randomized) output of an algorithm
$a: \mathcal{X} \rightarrow \Delta(\mathcal{D}).$
Together with $\mu$, each algorithm $a$ pins down a full joint distribution  $P \in \Delta(\Omega)$ over the entire state space $\Omega = \mathcal{X} \times \mathcal{Y} \times \mathcal{G} \times \mathcal{D}$. We define $\mathcal{P}_\mu \subseteq \Delta(\Omega)$ to be the set of feasible distributions that can be implemented by some algorithm $a$ given the population distribution $\mu$. A Designer chooses from the set of feasible distributions $\mathcal{P}_\mu$.\footnote{This is equivalent to choosing among algorithms but more convenient for the subsequent exposition.}

\section{Fair Algorithms} \label{sec:CS}

In a popular approach in the algorithmic fairness literature, the Designer  maximizes a measure of the accuracy of the algorithm subject to a fairness constraint. The primitives of this approach are an accuracy measure $v: \mathcal{D} \times \mathcal{Y} \rightarrow \mathbb{R}$, and a fairness constraint (which we elaborate on below). The Designer solves the optimization problem
\begin{align} \label{eq:co} \sup_{P \in \mathcal{P}_\mu}& \mathbb{E}_P[v(D,Y)] 
 \quad \mbox{ s.t. } (*) 
\end{align}
where leading choices of the fairness constraint $(*)$ include the following (see e.g., \citet{Mitchelletal} for a reference):

\begin{definition}[Equalized Odds] \label{Def:EO} $D \indep G \mid Y$.

\end{definition}

\begin{definition}[Equality of False Negatives] $D \indep G \mid Y=1$.
\end{definition}

\begin{definition}[Equality of False Positives] $D \indep G \mid Y=0$.
\end{definition}

\begin{definition}[Statistical Parity] \label{Def:SP} $D \indep G$. 
\end{definition}

Algorithms that satisfy these fairness constraints are sometimes viewed as ``non-discriminatory.'' Under this interpretation, the Designer first commits to choosing a non-discriminatory algorithm and subsequently maximizes accuracy among them. The prioritization of a non-discriminatory algorithm  may reflect the designer's intrinsic preferences or external constraints (e.g., imposed by policy, law, or public opinion); the optimization problem in (\ref{eq:co}) is consistent with either interpretation.

\section{Social Welfare Approach} \label{sec:Veil}

The approach discussed above is qualitatively different from a long tradition in moral philosophy and economics, which instead measures social welfare by aggregating utilities across individuals in society. Here, fairness considerations stem from contemplating how  an individual would choose to structure society prior to the realization of the individual's own identity \citep{Harsanyi1953,Harsanyi1955,Rawls}. From behind this ``veil of ignorance,''  a preference for fairness can be micro-founded as risk aversion over the realization of identity. We review a formalization of this approach. 

The Designer forms preferences by considering a hypothetical individual whose identity has yet to be realized. The possible identities are indexed by $i \in \mathcal{I}$ (each realized with probability $p_i$), with $u_i$ denoting the eventual utility under identity $i$. The individual's ex-ante payoffs, aggregating over the possible identities, are given by
$\sum_{i \in \mathcal{I}} p_i \phi(u_i)$
for some concave and strictly increasing function $\phi: \mathbb{R} \rightarrow \mathbb{R}$.\footnote{For an axiomatization and further discussion of this representation, see \cite{GrantKajiiPolakSafra}.} 

When $\phi$ is linear (corresponding to risk-neutrality), this representation returns utilitarianism, à la Harsanyi. Strict concavity of $\phi$ corresponds to risk aversion over which identity one assumes following the lifting of the veil of ignorance, and characterizes a preference for fairness. For example, if $\phi(u) = \sqrt{u}$ and there are two equally-likely individuals, then the ``more fair'' utility vector $(u_1,u_2) = (2,2)$ is preferred over the ``less fair'' utility vector $(u_1,u_2)=(1,3)$. An infinite risk-aversion limit returns Rawls' preferred perspective, in which social outcomes are  evaluated based on the most disadvantaged individual, i.e. $\min_{i \in \mathcal{I}} u_i$.\footnote{Let $\phi(u)=-u^{-\gamma}$ and consider preferences in the limit as $\gamma\rightarrow\infty$.}

This framework can be adapted to accommodate a group-based notion of fairness. To do this,  we extend the thought experiment and suppose our Designer adopts the perspective of an individual behind the veil who anticipates a \emph{compound} lottery: First, group identity is realized, and then a second lottery determines the remaining characteristics. If the individual  has different risk preferences over the two-stages of the compound lottery, the Designer will solve the following maximization problem
\begin{align} \label{eq:sw} \sup_{P \in \mathcal{P}_\mu}\sum_{g}p_{g}\phi\left(U\left(P_{g}\right)\right) 
\end{align}
where $P_g$ denotes the conditional distribution of $P$ given $G=g$ and $U(P_g) = \mathbb{E}_{P_g}[u(D,Y)]$ is the expected utility of group $g$.\footnote{This setup and representation resemble models of second-order expected utility; see \cite{KrepsPorteus78}. We could also allow $U(\cdot)$ to capture different risk preferences over the remaining covariates, and do not expect our results to change.} Concavity of $\phi$  now  reflects aversion to group-identity risk.

\section{Comparing the Two Approaches}
\label{sec:Compare}

We first demonstrate in a simple example that the two approaches can be incompatible.

\begin{example}
\label{eg:1}
The population is divided equally into two groups $G \in \{0,1\}$. The type $Y \in \{0,1\}$ is an indicator for whether a patient needs treatment, and the decision $D \in \{0,1\}$ is an indicator for whether the patient is treated. The only observable covariate is $X=G$, so the decision $D$ is the (random) outcome of an algorithm $a: \mathcal{G} \rightarrow \Delta(\{0,1\})$.

There is a $\delta \in (1/2,1)$ such that the population distribution $\mu$ satisfies
\begin{equation} \label{eq:mu}
    \mu(Y=j \mid G=j) = \delta
\end{equation}
for both $j\in \{0,1\}$. That is, most members of group $G=1$ need treatment, while most members of group $G=0$ do not need treatment.

Consider two Designers. The first solves the constrained optimization problem in (\ref{eq:co}) given the Equalized Odds constraint in Definition \ref{Def:EO} ($D \indep G \mid Y$). We leave the accuracy measure $v$ unspecified.

The second is our social welfare Designer who solves the maximization problem in  (\ref{eq:sw}), endowed with utility function $u(d,y) = \mathbbm{1}(d=y)$ and any concave and strictly increasing $\phi$.

\begin{claim} \label{claim:Example1} In Example \ref{eg:1}, any solution to the constrained optimization problem (\ref{eq:co}) is not a solution to the social welfare problem (\ref{eq:sw}), and vice versa.
\end{claim}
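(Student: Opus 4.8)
The plan is to exploit the fact that the only covariate is $X=G$: every feasible algorithm is pinned down by a pair of treatment probabilities $q_g:=P(D=1\mid G=g)\in[0,1]$ for $g\in\{0,1\}$, and, since the algorithm sees only $G$, the decision $D$ is conditionally independent of $Y$ given $G$. I will identify the solution sets of (\ref{eq:co}) and (\ref{eq:sw}) in terms of $(q_0,q_1)$ and check that they are disjoint.

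First, I reduce the Equalized Odds problem. Because $D$ depends on the data only through $G$, we have $P(D=1\mid Y=y,G=g)=q_g$ for every $y$, so $D\indep G\mid Y$ holds if and only if $q_0=q_1$. Hence \emph{every} solution of (\ref{eq:co}) has $q_0=q_1$, whatever the accuracy measure $v$. (One can say more: since $p_0=p_1=\tfrac12$ and $\mu(Y=1\mid G=1)=\mu(Y=0\mid G=0)=\delta$, the marginal of $Y$ is uniform, so after substituting $q_0=q_1=q$ the objective $\mathbb{E}_P[v(D,Y)]$ is affine in $q$ and the optimum is attained at $q\in\{0,1\}$, or at every $q\in[0,1]$ in a knife-edge case; but only $q_0=q_1$ is needed below.)

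Second, I solve the social welfare problem. With $u(d,y)=\mathbbm{1}(d=y)$ we have $U(P_g)=P(D=Y\mid G=g)$, and using $D\indep Y\mid G$ together with (\ref{eq:mu}),
\[
U(P_1)=q_1\delta+(1-q_1)(1-\delta)=(1-\delta)+q_1(2\delta-1),\qquad U(P_0)=q_0(1-\delta)+(1-q_0)\delta=\delta-q_0(2\delta-1).
\]
Since $\delta>\tfrac12$, both quantities lie in $[1-\delta,\delta]$, with $U(P_1)=\delta$ iff $q_1=1$ and $U(P_0)=\delta$ iff $q_0=0$. As $\phi$ is strictly increasing and $p_0,p_1>0$, it follows that $\sum_g p_g\phi(U(P_g))\le\phi(\delta)$, with equality iff $U(P_0)=U(P_1)=\delta$, i.e.\ iff $(q_0,q_1)=(0,1)$. (Concavity of $\phi$ plays no role here.) Thus (\ref{eq:sw}) has the \emph{unique} solution $(q_0,q_1)=(0,1)$: treat everyone in group $1$ and no one in group $0$.

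Finally, I conclude: every solution of (\ref{eq:co}) satisfies $q_0=q_1$, while the unique solution of (\ref{eq:sw}) has $q_0=0\neq 1=q_1$, so a solution of either problem cannot solve the other, in either direction. Maximizers exist because the feasible sets are continuous images of compact cubes and the objectives are continuous, so the statement is non-vacuous. The computations are routine; the one step that deserves care is the first — observing that, since the algorithm conditions on $G$ alone, $D$ is uninformative about $Y$ given $G$, which is what makes Equalized Odds collapse to equal treatment rates across groups.
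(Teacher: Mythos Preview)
Your proof is correct and follows essentially the same approach as the paper: parametrize algorithms by $(q_0,q_1)$, observe that Equalized Odds forces $q_0=q_1$, and show the social welfare optimum is uniquely $(q_0,q_1)=(0,1)$. Your write-up is in fact more careful than the paper's own proof, spelling out why $D\indep Y\mid G$ makes Equalized Odds equivalent to $q_0=q_1$, verifying uniqueness of the social welfare solution, and noting that concavity of $\phi$ is not needed for the claim.
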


\begin{proof} Since the only covariate on which the algorithm conditions is group identity, each algorithm can be identified with a pair $(q_0,q_1)   \in [0,1]^2$ where $q_g = P_g(D=1)$. Since every pair is feasible, it is possible for the social welfare Designer to optimize for the expected utilities of the two groups separately. His payoff is maximized by setting $q_1=1$ and $q_0=0$, yielding a  payoff of $\frac12\phi(\delta)+\frac12\phi(\delta)=\phi(\delta)$. This algorithm clearly fails Equalized Odds. In contrast,  an  algorithm satisfies Equalized Odds if and only if $q_0=q_1 := q$. 
\end{proof}

\medskip

In this example, the two Designers are in conflict. The social welfare Designer prefers to treat everyone in group $G=1$ and no one in group $G=0$, but this algorithm fails Equalized Odds and  so is unacceptable to the constrained optimization Designer. 

In the other direction, any algorithm that satisfies Equalized Odds leads to a lower payoff for the social welfare Designer. How large is the loss? We showed in the proof of Claim \ref{claim:Example1} that the social welfare Designer's preferred algorithm gives him a payoff of $\phi(\delta)$. In contrast, any algorithm that satisfies Equalized Odds gives the social welfare Designer a payoff of    \begin{align} \label{eq:UtilityCO}
   \frac12 & \phi\left(q\delta+(1-q)(1-\delta)\right) \\
   & \quad + \frac12 \phi\left((1-q)\delta+q(1-\delta)\right). \nonumber
\end{align}
Since $\phi$ is concave, Jensen's inequality implies that (\ref{eq:UtilityCO}) is less than
  \[  \phi\left(\frac12 \delta + \frac12 (1-\delta)\right)  = \phi\left(\frac12\right)   < \phi(\delta)
    \]
where the inequality follows from our assumptions that $\delta >1/2$ and that $\phi$ is strictly increasing.

Thus, not only is the social welfare Designer's payoff strictly lower under the constrained optimization algorithm,
but interestingly the disparity in payoffs becomes  larger the more the social welfare Designer cares about fairness (i.e., the more concave $\phi$ is). So the normative content of the two approaches about fairness are qualitatively different: Not only do the Designers have different optimal algorithms, but their fairness objectives push them in contrasting directions.
 
 We observe finally that the disagreement between designers cannot be resolved by  relaxing the fairness constraint in the constrained optimization problem. To see this, observe that the social welfare Designer's preferred algorithm perfectly correlates $D$ and $G$. Thus it not only fails Equalized Odds, but would fail any moderate relaxation of Equalized Odds as well. For example, if Equalized Odds is relaxed  to the condition that 
\begin{align} \label{eq:EOrelax} \vert \mathbb{E}[D  &\mid Y=y, G=1] \\
& - \mathbb{E}[D \mid Y=y,G=0] \vert \leq \varepsilon. \nonumber \end{align}
for each $y$, then the statement of the claim holds for all  $\varepsilon<1$.
\end{example}
\bigskip

We now show that these findings are not special to our particular specifications of the two Designers (e.g., the choice of utility in the social welfare approach or the consideration of Equalized Odds as the fairness constraint) but reflect a more general tension between the approaches. Towards this, note that the Designer's choices can be modeled as a \emph{choice rule} $c: \Delta(\mathcal{X},\mathcal{Y},\mathcal{G} ) \rightarrow \Delta(\Omega)$ specifying a feasible distribution $c(\mu)\in \mathcal{P}_\mu$  for every population distribution $\mu$. Let  $\mathcal{C}_{co}$ be the set of choice rules for all constrained optimization Designers, as we vary over the fairness constraints $(*)$ from Definitions \ref{Def:EO}-\ref{Def:SP} and all accuracy measures $v$  (with an arbitrary tie-breaking rule).\footnote{If no feasible distribution satisfies the fairness constraint, then pick any feasible distribution.} Let $\mathcal{C}_{sw}$ be the set of choice rules for all social welfare Designers, as we vary over all concave and strictly increasing functions $\phi$ and all \emph{nontrivial} utility functions $u$  (defined  below), again with any tie-breaking rule.\footnote{The subsequent Proposition 1 holds without the concavity restriction but we maintain it for ease of comparability with the algorithmic fairness approach.}

\begin{definition} A utility function $u:\mathcal{D} \times \mathcal{Y} \rightarrow \mathbb{R}$ is \emph{nontrivial} if not all types rank decisions the same, i.e. $u(d,y) > u(d',y)$ and $u(d,y')<u(d',y')$ for some  $d,d',y,y'$.
\end{definition}

The following result says that the constrained optimization and the social welfare approaches yield distinct choice rules.

\begin{proposition} \label{prop:Main}
Any constrained optimization choice rule is not a social welfare choice rule, and vice versa, i.e. $\mathcal{C}_{co} \cap \mathcal{C}_{sw} = \emptyset$
\end{proposition}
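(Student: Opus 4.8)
The plan is to prove the single (symmetric) assertion $\mathcal{C}_{co}\cap\mathcal{C}_{sw}=\emptyset$ by showing that no social welfare choice rule can lie in $\mathcal{C}_{co}$. So I would fix an arbitrary social welfare Designer, with concave strictly increasing $\phi$ and nontrivial $u$, and call its choice rule $c$ (for any tie-breaking rule). The target is to exhibit one population $\mu$ on which $c(\mu)$ violates \emph{every} fairness constraint in Definitions \ref{Def:EO}--\ref{Def:SP} at once. This finishes the argument: for each of those constraints a feasible distribution satisfying it always exists (the constant algorithm makes $D$ degenerate, hence independent of $(Y,G)$), so every constrained optimization choice rule must return a distribution that satisfies its constraint; hence $c\neq c_{co}$ for every constrained optimization rule $c_{co}$, i.e. $c\notin\mathcal{C}_{co}$.

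For the population I would generalize Example \ref{eg:1}, working in the binary setting $\mathcal{D}=\mathcal{Y}=\{0,1\}$ implicit in the fairness definitions. Nontriviality of $u$ means the two types disagree about the best decision, so, relabeling decisions if needed, type $1$ strictly prefers decision $1$ and type $0$ strictly prefers decision $0$. Take two equally likely groups with $\mathcal{X}=\mathcal{G}$ (the algorithm conditions only on group identity) and $\mu(Y=j\mid G=j)=\delta$ for a $\delta\in(1/2,1)$ to be chosen. Because $\mathcal{X}=\mathcal{G}$, each group's expected utility depends only on that group's own decision probability $q_g=P_g(D=1)$, and every pair $(q_0,q_1)\in[0,1]^2$ is feasible, so the objective $\tfrac12\phi(U(P_0))+\tfrac12\phi(U(P_1))$ separates and is optimized coordinatewise, \emph{independently of $\phi$}. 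For $\delta$ close enough to $1$, group $1$'s expected utility is strictly increasing in $q_1$ (the $Y=1$ term dominates the difference $u(1,1)-u(0,1)>0$ against $u(1,0)-u(0,0)<0$) and, symmetrically, group $0$'s expected utility is strictly decreasing in $q_0$; the unique optimum is therefore $q_1=1$, $q_0=0$. Hence $c(\mu)$ has $D=G$ almost surely.

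It then remains to check that $D=G$ violates all four constraints for this $\mu$: both groups have positive mass at $Y=1$ and at $Y=0$, and conditional on either value of $Y$ the decision still coincides with the group label, so $D\not\indep G\mid Y$, $D\not\indep G\mid Y=1$, $D\not\indep G\mid Y=0$, and $D\not\indep G$ (the marginal correlation is perfect). The step I expect to be the crux is the previous paragraph's claim that the social welfare optimum is \emph{forced}, for every admissible $\phi$ and every nontrivial $u$, to make $D$ perfectly correlated with $G$: the separability induced by $\mathcal{X}=\mathcal{G}$ is what removes $\phi$ from the comparison (each $\phi(U(P_g))$ is maximized at the same $q_g$ whatever $\phi$ is), and taking $\delta$ near $1$ is what converts the two nontriviality inequalities into strict monotonicity of each group's expected utility in the direction of its favored decision, yielding uniqueness of the optimum so that the tie-breaking rule is irrelevant.
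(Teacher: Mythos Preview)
Your proposal is correct and follows essentially the same construction as the paper's proof: take $X=G$ in a binary setting, pick $\mu$ with $\mu(Y=j\mid G=j)=\delta$ for $\delta$ close to $1$ so that the social welfare optimum is uniquely $(q_0,q_1)=(0,1)$, and observe that every constraint in Definitions~\ref{Def:EO}--\ref{Def:SP} forces $q_0=q_1$. Your write-up is in fact a bit more careful than the paper's (you note explicitly that the constant algorithm makes each constraint set nonempty and that uniqueness of the optimum moots the tie-breaking rule), but the argument is the same.
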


\begin{proof} It is sufficient to show that for every $v$, $(*)$, nontrivial $u$, and strictly increasing and  concave $\phi$, there exists a $\mu$ such that $c(\mu)$ is different under the two approaches. We will show this by construction, adapting Example 1. We prove the result assuming binary $Y$, $G$, and $D$, although the argument can be extended otherwise. Suppose that $X=G$ and define $q_g = P_g(D=1)$, observing that each algorithm is again identified with the pair $(q_0,q_1)$.

By assumption that $u$ is nontrivial, there exist $y_0,y_1 \in \mathcal{Y}$ such that $u(1,y_1) > u(0,y_1)$ but $u(1,y_0) < u(0,y_0)$. Consider a population distribution $\mu$ where there is a $\delta$ such that
\begin{align*}
    \mu(Y=y_j &\mid G=j)=\delta
\end{align*}
for each $j\in \{0,1\}$. Then (\ref{eq:sw}) reduces to choosing $(q_0,q_1) \in [0,1]^2$ to maximize
\begin{align*}
     & \frac12 \phi( (1-q_1)\delta  u(0,y_1) +q_1(1-\delta)  u(1,y_0) \\
     & \,\, \,\, \,
     + q_1\delta u(1,y_1)+ (1-q_1)(1-\delta)  u(0,y_0)) + \\ 
     & \frac12 \phi( (1-q_0)(1-\delta)  u(0,y_1) +q_0\delta  u(1,y_0) \\
     & \,\, \,\, \,
     + q_0(1-\delta) u(1,y_1)+ (1-q_0)\delta  u(0,y_0))
\end{align*}
We can choose $\delta$ large enough such that the unique solution is $(q_0,q_1)=(0,1)$. But each of the fairness constraints in Definitions \ref{Def:EO}-\ref{Def:SP} is satisfied if and only if $q_0=q_1$, so we have the desired result. 
\end{proof}

\medskip

The result does not say that social welfare Designers and constrained optimization Designers never agree.\footnote{For example, if (\ref{eq:mu}) in Example 1 were replaced by 
$\mu(Y=1 \mid G=g) = \delta$, then the algorithm that treated everyone in the population would maximize the social welfare Designer's payoff, and also satisfy Equalized Odds.} Instead, for any  social welfare Designer and any  constrained optimization Designer, there is \emph{some} population (i.e. $\mu$) for which the Designers' preferred distributions diverge. Since to our knowledge neither view of fairness is predicated on specific assumptions on the population, we interpret the approaches as intended ``universally,'' and Proposition \ref{prop:Main} as thus establishing that the two approaches are distinct.

Intuitively, the result follows  because the two Designers view fairness differently. The constrained optimization Designer requires conditional independence of certain variables given others, which is a \textit{statistical} property across the two groups. The social welfare Designer is instead concerned with what the joint distribution of all the variables implies for the \emph{utilities} across the two groups. When not all distributions on the state space are feasible, these two approaches can be in conflict with one another. 

The challenge of micro-founding the constrained optimization approach using a social welfare narrative does not necessarily imply that those approaches are misguided, but does imply that they require novel justifications. We leave exploration of such questions for future work.

\section{A More General Framework} \label{sec:Generalization}

In this final section, we propose a general framework that nests the  two approaches. First note the constrained optimization problem specified in (\ref{eq:co}) can be reformulated as
\begin{align}   \sup_{P \in \mathcal{P}_\mu}& \mathbb{E}_P[v(D,Y)]- h(P_{g_1},\dots,P_{g_n})  \nonumber
\end{align}
where $h$ is an indicator function that returns infinite cost if the fairness constraint $(*)$  is violated. For example, the Equalized Odds constraint (Definition \ref{Def:EO}) can be expressed as 
\begin{align}   {P_{g_1}(D|Y=y)}=\dots ={P_{g_n}(D|Y=y)}  \nonumber
\end{align}
for all $y$. Similar expressions can be written for the other constraints in Definitions \ref{Def:EO}-\ref{Def:SP}.\footnote{The same holds for any relaxed versions of these constraints, e.g. condition (\ref{eq:EOrelax}).}

Extending this representation, we consider the following maximization problem
\begin{align*}   \sup_{P \in \mathcal{P}_\mu}& \phi(\mathbb{E}_P[v(D,Y)])- h(P)  
\end{align*}
where $\phi(\cdot)$ is a strictly increasing function and $h\left(\cdot\right)$ is a measure of unfairness that satisfies
$h\left(P\right)\geq0$ with $h\left(P\right)=0$ if $P_{g}$ is the
same for all $g$.  In addition to nesting the constrained optimization problem, this representation returns the social welfare formulation when $v=u$, $\phi$ is concave and
\[
h\left(P\right)=\phi\left(\sum_{g}p_{g}U\left(P_{g}\right)\right)-\sum_{g}p_{g}\phi\left(U\left(P_{g}\right)\right)
\]
This unfairness measure $h$ can be interpreted as comparing the  utility of an individual whose identity is known to be the expected group identity, versus the expected utility of an individual with a random group identity. By Jensen's inequality, $h(P)$ is always positive, and its value is larger when the distribution over utilities (induced by different realizations of the group identity) is more dispersed. Alternative choices for the unfairness measure $h$ may generate novel preferences for fair algorithms, and would be interesting avenues for future research.

\end{document}